\theoremstyle{definition}
\newtheorem{thm}{Theorem}[section]
\newtheorem{rem}[thm]{Remark}
\begin{document}

\begin{center}
{\Large{\bf On the dynamics of the Coronavirus epidemic and the unreported cases:
the Chilean case
}}\\
\vspace{0.5cm}
{\large Andr\'es Navas\footnote{Supported by MICITEC Chile.} 
\,\,\& \,\, 
Gast\'on Vergara-Hermosilla\footnote{Supported by  the European Union’s Horizon 2020 research and 
innovation programme under the Marie Sklodowska-Curie grant agreement No 765579.}
}
\end{center}

\vspace{0.3cm}

\noindent{\bf \Large{Introduction}}

\vspace{0.3cm}

One of the main problems faced in the mathematical modeling of the coronavirus epidemic has been 
the lack of quality data. In particular, it is estimated that a large number of cases have been unreported, 
especially those of asymptomatic patients. This is mostly due to the strong demand of tests required by 
a relatively complete report of the infected cases. Although the countries/regions that have managed to 
control the epidemic have been precisely those that have been able to develop a great capacity of testing, 
this has not been achieved in most of the situations.

In general, the number of unreported patients has been estimated by extrapolating data from the reported cases 
assuming that both numbers vary proportionally. However, this view can be criticized (al least) in two directions:

\noindent -- It does not consider the dynamical role of the unreported cases in the evolution of the epidemic and, 
by extension, in the number of reported cases;

\noindent -- it does not allow a possible variation of the proportion between the numbers of reported 
and unreported cases when the conditions of social distancing remain unchanged. 

In this work, we address these points incorporating the unreported cases into the modeling. In a first introductory 
section, we discuss from a mathematical perspective what happens when the testing capacity is very low. Using 
a very simple argument we show that, in this context, the dynamics of the disease is actually governed by the growth 
of the number of unreported cases, and that of the reported patients becomes much smaller (in a progressive way) than 
that of the total cases. Although the discussion is presented in a context of extreme (hence ``ideal'') conditions, it certainly 
illustrates how essential is to consider the role of the unreported cases to analyze the global evolution of the epidemic.

Several mathematical models have been proposed to deal with the unreported cases and their role in the progression of the 
epidemic. In the second section of this work we address one of them, with the acronym SIRU, recently proposed by 
Zhihua Liu, Pierre Magal, Ousmane Seydi and Glen Webb \cite{1,2,3}. After a brief presentation of the model and 
an explanation of why it is more adjusted to the current epidemic, we proceed to establish a series of structural results. 
Although this does not completely close the study of the qualitative properties of the underlying differential equations, 
we can already glimpse an analogy with those of the classical SIR model. However, we stress an important difference: 
in the SIRU model, the curves that appear do not necessarily have a single peak. Specifically, we exhibit a simple 
method to detect parameters that give rise to curves with at least two peaks.

The model of Liu, Magal, Seydi and Webb has already been used to describe the evolution of the epidemic in various countries 
(China, South Korea, the United Kingdom, Italy, France and Spain). In the last section of this work, we implement this modeling 
to the global Chilean scenario using the official COVID-19 data provided by the Chilean government \cite {8}. However, unlike 
\cite{1,2,3}, our implementation is novel in that it uses a variable transmission rate for the disease, which is more pertinent 
according to the local epidemiological evolution.

This work concludes with a section of general conclusions in which some future lines of research are also described.


\section{On the dynamics with low testing capacity}

\subsection{The context}

We denote by $U$  the number of positive cases that one would expect to detect within a certain unit of time 
in some region using the maximal 
capacity of testing that is available.\footnote{We note that this quantity shouldn't be equal to the total number of tests, but to a 
fraction of this. Indeed, it is expected (and international statistics confirm this) that, for each positive test, a certain fixed 
amount of tests will have negative outputs. This percentage depends on numerous factors, in particular, on the bias to 
preferably test symptomatic patients.} Suppose that the epidemic has evolved to a point where the number
of detected cases in this unit time is systematically very close to this threshold $ U $. The reported 
contagion curve is then taking a plateau shape, but a question immediately arises: has the contagion process 
entered into an stationary phase -with a reproduction rate $ R $ equal to 1 or slightly higher-, or is a 
significant number of positive cases being indetected?

It is impossible to answer {\em a priori} to this question. However, it is very unlikely that
the process enters a stationary regime exactly when the threshold is reached. We argue below that, 
if the entry into an stationary regime has not occurred, then not only a fixed ``proportion'' of the number of cases is being 
undetected, but this is the case of a much larger amount; more precisely, the latter follows an exponential rate growth. 


\subsection{A mathematical argument}

To simplify the discussion, our unit time will be equal to the total period of the disease.
For an instant $ i $, we denote by $ c_i $ the number of reported positive cases, and 
by $ C_i $ that of total cases. We assume that, while $ c_i $ remains below the threshold, 
it assumes values very close to it in the future evolution. We write this as follows:
$$ U-v \leq c_i \leq U, $$
where $ v $ is relatively small compared to $ U $.

We will also assume that we are not in a stationary regime (that is, the apparent stationarity is actually 
a consequence of a default of testing). Therefore, the value of $ C_i $ is substantially greater than the 
threshold and, consequently, than $ c_i $. At the instant $i$, the number of cases 
that are not being detected is $C_i - c_i$.
These individuals will have a higher social activity than those detected as infected (since the latter will be quarantined).
Therefore, the average number of new infected individuals by these undetected agents will be a value
$ \hat{\mathcal{R}} $ strictly larger than $  \mathcal{R} $.

Thus, on average, the detected infected individuals of time $i$ generate $ \mathcal{R} $ new infections at 
time $ i + 1 $, while those undetected at time $ i $ (whose number is $ C_i - c_i $) each generate 
$ \hat{\mathcal{R}}$ new infections. Therefore, the following inequality holds:
$$C_{i+1} \geq  \mathcal{R} c_i + \hat{\mathcal{R}} \, [C_i - c_i].$$
Since $ \mathcal{R} \geq 1 $, this implies
$$C_{i+1} \geq c_i + \hat{\mathcal{R}} \, [C_i-c_i],$$
hence
$$C_{i+1} - c_{i+1} \geq c_i - c_{i+1} + \hat{\mathcal{R}}  \, [C_i - c_i ] .$$
Since $ c_i \geq U - v $ and $ c_{i + 1} \leq U $, the above implies
$$C_{i+1} - c_{i+1} \geq \hat{\mathcal{R}} \, [C_i - c_i ] - v.$$
By simple recurrence, for an initial time $ i_0 $ and all $ n \geq 1 $, this gives
\begin{eqnarray*}
C_{i_0 + n} - c_{i_0 + n} 
&\geq& \hat{\mathcal{R}} \, [C_{i_0+n-1} - c_{i_0+n-1}] - v\\
&\geq& \hat{\mathcal{R}} \, [ \hat{\mathcal{R}} \, [C_{i_0+n-2} - c_{i_0+n-2}] - v] - v 
\,\, = \,\, \hat{\mathcal{R}}^2 \, [C_{i_0+n-2} - c_{i_0+n-2}] - v\, [1 + \hat{\mathcal{R}} ] \\
&\vdots& \\
&\geq& \hat{\mathcal{R}}^n \, [C_{i_0} - c_{i_0}] - v \, [1+\hat{\mathcal{R}} + \hat{\mathcal{R}}^2 + \ldots + \hat{\mathcal{R}}^{n-1}] 
\,\, =  \,\, \hat{\mathcal{R}}^n \, [C_{i_0} - c_{i_0}] - v \, \left[ \frac{\hat{R}^n - 1}{\hat{\mathcal{R}} - 1} \right]\\
&\geq& \hat{\mathcal{R}}^n \left[ C_{i_0} - c_{i_0} - \frac{v}{\hat{\mathcal{R}} - 1} \right].
\end{eqnarray*}
In other words, 
$$C_{i_0 + n}  \geq  c_{i_0 + n} + \hat{\mathcal{R}}^n \left[ C_{i_0} - c_{i_0} - \frac{v}{\hat{\mathcal{R}} - 1} \right].$$
It is natural to expect that the term $ v / (\hat{R} -1) $ is (very)
small with respect to $ C_ {i_0} - c_ {i_0} $. Indeed, on the one hand, $ \hat{R} $ does not approximate indefinitely
to $ 1 $ (since the undetected infected individuals do not significantly reduce their social activity); on the other hand, our analysis deals 
with times for which $ C_i $ is very (although, {\em a priori}, not exponentially) greater than $ c_i $. (A more robust argument 
consists in choosing not only a single initial time $ i_0 $, but to implement the previous inequality along a sequence of 
consecutive times and finally average along these inequalities.)

Assuming all of the above, the conclusion is clear: the number of infected people $ C_ {i_0 + n} $ is much higher than the number $ c_ {i_0 + n} $
of reported infected individuals. Indeed, the difference between the two is bounded from below by an exponential of ratio $ \hat{\mathcal {R}} $, while
that the evolution of the detected cases is governed by the rate $ \mathcal{R} $, which is strictly less than $ \hat {\mathcal{R}} $. So, the curve that
we see (that of the values $ c_{i + n} $) is not only very far from the real one (that of $ C_{i + n} $), but the difference between them has an  
exponential growth that we are not perceiving.


\subsection{A more theoretical discussion}

\vspace{0.2cm}

The ``toy argument'' above shows something evident: if we are not able to follow the evolution of an epidemic through appropriate mass testing, then we 
lose track of the infectious curve. In more sophisticated terms, what it reveals is that as long we are not aware that the reproduction rate $ \mathcal{R} $ is
{\em strictly smaller} than $ 1 $, the dynamical system of the epidemic moves in a regime of either slight exponential growth or, at least, of {\em instability}. 
In such a regime, small variations of the initial conditions can lead to exponential explosion. Now, to a large extent, these initial conditions are provided by 
official data. However, if these move around the maximum of what the system can detect, then we can hardly know how accurate they are and, therefore, 
whether we really are in a stationary situation or whether we have advanced to an exponential explosion of cases that we are not perceiving. For this reason, for 
each instance in which the number of positive detected cases becomes nearly constant (that is, when the curve of reported cases begins to acquire a plateau 
shape), it seems reasonable to apply a substantive increase in the number of tests (both in quantity and spectrum). If this leads to a significant increase 
in the number of positive cases, then most likely this would mean that, actually, the regime was not stationary, but was simply exceeding a threshold 
above which a significant amount of infections cannot be reported. We will return to this point in the general conclusions of this work.


\section{About the model of Liu, Magal, Seydi and Webb}

\subsection{Presentation}

The key argument in the preceding section is that unreported patients have a greater dynamic role than those  that are reported (since the former do not 
enter into quarantine), and therefore they contribute more importantly to the epidemic. However, in the traditional SIR model, both types of patients are 
part of the same compartment. In \cite{1}, Liu, Magal, Seydi and Webb solve this problem by separating them into two compartments. Denoting respectively 
by $ S, I, R$ and $U $ the susceptible individuals, infected individuals who do not yet have symptoms (and are at incubation stage), reported infected 
individuals,  and unreported (either asymptomatic or low symptomatic) infected individuals, they consider the following diagram flux:
\begin{figure}[H]
\centering
    \includegraphics[width=0.9\textwidth]{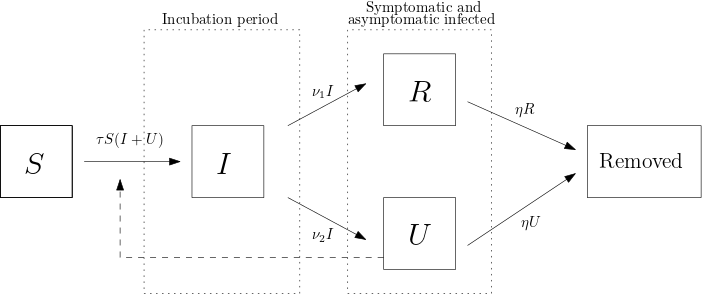}
    \caption{Diagram flux}
    \label{fig:mesh1}
\end{figure}

The differential equations attached to the diagram above and that govern the dynamics of the epidemic are the following:

\begin{equation}\label{modelo}
 \arraycolsep=1.5pt\def\arraystretch{1.5}
\left\{\begin{array}{l}S^{\prime}(t)=-\tau S(t)[I(t)+U(t)] \\ I^{\prime}(t)=\tau S(t)[I(t)+U(t)]-\nu I(t) \\ R^{\prime}(t)=\nu_{1} I(t)-\eta R(t) \\ U^{\prime}(t)=\nu_{2} I(t)-\eta U(t),\end{array}\right.
\end{equation}
where $ t \geq t_0 $ corresponds to time, with $ t_0 $ being the starting date for the study (as in \cite{1,2,3},
in the implementation, we will consider the time $ t_0 $ corresponding to the beginning of the epidemic). Although 
this system of differential equations makes perfect sense when prescribing any initial condition, in 
epidemiological modeling one is naturally lead to use data of the following type:
$$
S(t_0)=S_0>0,\quad I(t_0)=I_0>0,\quad R(t_0) \geq 0\quad \mbox{and} \quad U(t_0)=U_0\geq 0.
$$
The parameters used in the model are described in the Table below.
In particular, note that $ \nu = \nu_1 + \nu_2 $. In addition, all the parameters
that are considered $ \tau, \nu, \nu_1, \nu_2, \eta $ are positive.
\begin{table}[H]
\begin{center}
 \arraycolsep=1.4pt\def\arraystretch{1.2}
\begin{tabular}{ p{2.3cm} p{11.8cm}  }
\hline \hline
Symbol & Interpretation\\
\hline \hline
$t_0$ & Initial time.  \\
$S_0$ & Number of individuals susceptible to the disease at time $ t_0 $.  \\
$I_0$ & Number of infected individuals (in incubation period)
without symptoms at time $ t_0 $.  \\
$R_0$ & Number of reported infected individuals at time $ t_0 $. \\
$U_0$ & Number of unreported infected individuals at time $ t_0 $. \\
$\tau$ & Transmission rate of the disease.  \\
$1/\nu$ & Average time during which the infectious asymptomatic individuals remain in incubation.\\ 
$f$ & Fraction of asymptomatic infected individuals that become reported infected. \\
$\nu_{1}=f \nu$ & Rate at which asymptomatic infected cases become reported.  \\ 
$\nu_{2}=(1-f) \nu$ & Rate at which asymptomatic infected become unreported infected individuals (asymptomatic or mildly symptomatic). \\
$1/\eta$ & Average time during which an infected individual presents symptoms. \\
\hline \hline
\end{tabular}
\caption{Parameters and initial conditions of the model.}
\label{tabla:sencilla0}
\end{center}
\end{table}
\vspace{-0.5cm}

Note that in the first of the equations of (\ref{modelo}), namely
$$S' (t) = - \tau S(t) \, [I(t) + U(t)],$$
the role of $ I $ and $ U $ in the spread of the infection is the same. This is the essential point of the model: it 
gives the same dynamic role to those who are infected and do not have symptoms as to those who are not reported, 
because the latter do not go into quarantine and, in fact, have a similar social activity. Certainly, the model can be refined in many
directions; for example, one could give different though still dynamical roles to $ I $ and $ U $ by attaching to them different 
positive parameters $ \tau_1 \neq \tau_2 $ (this could be justified by that $ U $ includes individuals with low 
symptomaticity who can practice self-care). However, it is already worth to visualize some of the main 
properties of this model and to implement it in specific situations following this general  format.
 

\subsection{Some basic qualitative properties} 

We next consider the system of equations (\ref{modelo}) in more detail. 
We will assume an epidemic situation, which is summarized by the condition
\begin{equation}\label{epidemia}
\tau S_0 - \nu > 0.
\end{equation}
We will also consider the initial conditions of Liu, Magal, Seydi and Webb: 
\begin{equation}\label{inicio}
S_0 > 0, \,\, I_0 > 0, \,\, R_0 = 0 \,\, \mbox{ and } \,\, U_0 = \frac{\nu_2 \, I_0}{\eta + \chi_2}
\end{equation}
for a certain $ \chi_2> 0 $. (The precise value of the parameter $ \chi_2 $ will be given in the next section;
here we just retain the fact that it is positive.) For simplicity, our starting time will be  $ t_0: = 0 $.

\begin{thm}
{\em In an epidemic situation} (\ref{epidemia}) {\em and starting with the initial conditions} (\ref{inicio}), 
{\em the following three properties are fulfilled:}
\begin{enumerate}
\item[(i)] {\em For all time $t>0$, the values of  $ S (t) $, $ U (t), I (t) $ and $ R (t) $ exist, are positive and strictly
smaller than $ P: = S_0 + I_0 + U_0 $ (the total population);}
\item[(ii)]  {\em $ S (t) $ converges to a certain positive limit value $ S_{\infty} $ as $ t \to \infty $,
while $ I (t), R (t), U (t)$ converge to $ 0 $;}
\item[(iii)] {\em $ R (t) / U (t) $ converges to $ \nu_1 / \nu_2 $ from below.}
\end{enumerate}
\end{thm}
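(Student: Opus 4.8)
The plan is to establish the three properties in order, since each relies on the previous one.

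First, for property (i), I would argue local existence and uniqueness follow from the smoothness (indeed, polynomial nature) of the right-hand side of (\ref{modelo}), so a solution exists on some maximal interval. The key is to show positivity is preserved and that the solution stays bounded, which rules out finite-time blow-up and gives global existence. For positivity, I would examine each equation on the boundary of the positive orthant: whenever a coordinate hits $0$, its derivative is nonnegative (e.g. if $S=0$ then $S'=0$; if $I=0$ then $I'=\tau S\, U \geq 0$; if $R=0$ then $R'=\nu_1 I \geq 0$; similarly for $U$), so trajectories cannot leave the positive region. Strict positivity for $t>0$ then follows from uniqueness (a coordinate cannot touch $0$ and return). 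For the upper bound, I would add the first two equations to get $(S+I)'=-\nu I \leq 0$, so $S+I$ is nonincreasing; combined with control of $R$ and $U$ via their linear decay terms $-\eta R, -\eta U$ fed by the bounded input $I$, this should yield that each coordinate stays strictly below $P$.

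Second, for property (ii), the monotonicity of $S$ (since $S'=-\tau S[I+U] \leq 0$) together with positivity forces $S$ to decrease to a limit $S_\infty \geq 0$; the main content is showing $S_\infty>0$. Here I expect to use $(S+I)'=-\nu I$, which integrated shows $\int_0^\infty I(t)\,dt<\infty$, so $I$ is integrable; feeding this into $S'/S = -\tau[I+U]$ and showing $U$ is likewise integrable (via its linear equation driven by the integrable $I$) gives that $\int_0^\infty [I+U]<\infty$, whence $\log(S_\infty/S_0)$ is finite and $S_\infty>0$. The convergence of $I,R,U$ to $0$ should follow from integrability plus a uniform-continuity or Barb\u{a}lat-type argument, or more directly from analyzing the asymptotic linear behavior of $R,U$ once $I\to 0$.

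Third, for property (iii), I would study the quotient $R/U$ directly. Since $R$ and $U$ satisfy linear equations with the same decay rate $\eta$ and the same driving term $I$ up to the constant factors $\nu_1$ and $\nu_2$, the combination $\nu_2 R - \nu_1 U$ satisfies a clean homogeneous-type equation; computing $(\nu_2 R - \nu_1 U)' = -\eta(\nu_2 R - \nu_1 U)$ shows $\nu_2 R - \nu_1 U$ decays like $e^{-\eta t}$ from its initial value. With $R_0=0$ and $U_0=\nu_2 I_0/(\eta+\chi_2)>0$, the initial value $\nu_2 R_0 - \nu_1 U_0$ is negative, so $\nu_2 R - \nu_1 U<0$ for all $t$, giving $R/U<\nu_1/\nu_2$, i.e.\ convergence \emph{from below}; the convergence itself requires knowing $U>0$ does not decay faster than the $e^{-\eta t}$ controlling the numerator difference, which I would extract from the asymptotics of property (ii).

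I expect the main obstacle to be proving $S_\infty>0$ rigorously, i.e.\ establishing integrability of $U$ and hence of $I+U$ over $[0,\infty)$; this is where the argument must go beyond soft monotonicity and genuinely exploit the coupling between the equations, and where I would need to be careful that the estimates close. The positivity boundary analysis in (i) and the linear-combination trick in (iii) are comparatively mechanical once the right quantities are identified.
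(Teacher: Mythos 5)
Your parts (i) and (ii) are correct and take a genuinely different (and in places cleaner) route than the paper. For (i) you invoke quasipositivity of the vector field on the boundary of the orthant plus boundedness, where the paper argues by contradiction at a first vanishing time with Gr\"onwall-type integrations; note that your ``strict positivity follows from uniqueness'' is only literally right for $S$ (where $\{S=0\}$ is invariant for the scalar equation), while for $I,R,U$ you need the integrating-factor/Gr\"onwall bound $I(t)\geq I_0e^{-\nu t}$ etc., which is exactly the paper's mechanism. Your strict bound by $P$ also closes, e.g.\ via $\nu\int_0^t I\,ds\leq S_0+I_0$ so that $R\leq \frac{\nu_1}{\nu}(S_0+I_0)<P$ and similarly for $U$; the paper instead adds a removed compartment $D'=\eta(R+U)$, $D(0)=0$, and uses exact conservation $S+I+R+U+D=P$, which is slicker. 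For (ii), your route to $S_\infty>0$ --- $I\in L^1$ from $(S+I)'=-\nu I$, then $U\in L^1$ by variation of constants, then finiteness of $\log(S_\infty/S_0)=-\tau\int_0^\infty(I+U)$ --- is simpler and more standard than the paper's differential inequality $(I+U)'\leq -S'+\frac{c}{\tau}\frac{S'}{S}$ with $c=\min\{\eta,\nu\}$, and Barb\u{a}lat applies since all derivatives are bounded. Contrary to your own forecast, this is \emph{not} where the difficulty lies.

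The genuine gap is in (iii). Your linear-combination identity is correct --- indeed $(\nu_2R-\nu_1U)'=-\eta(\nu_2R-\nu_1U)$, so $\nu_2R-\nu_1U=-\nu_1U_0e^{-\eta t}<0$, which gives the ``from below'' part immediately (the paper records this same identity, but only in a remark \emph{after} the theorem, to deduce decay rates). However, the convergence claim reduces to $\frac{\nu_1}{\nu_2}-\frac{R}{U}=\frac{\nu_1U_0}{\nu_2\,Ue^{\eta t}}\to 0$, i.e.\ to $U(t)e^{\eta t}\to\infty$, and you defer this to ``the asymptotics of property (ii).'' That cannot work: (ii) gives only $U\to 0$ and $S\to S_\infty>0$, with no \emph{lower} bound on the decay rate of $U$. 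If $Ue^{\eta t}$ stayed bounded, tending to some $L<\infty$, then $R/U$ would converge to $\frac{\nu_1}{\nu_2}-\frac{\nu_1U_0}{\nu_2 L}$, strictly below $\nu_1/\nu_2$, consistently with everything you established in (i)--(ii); so a new quantitative input is required. This is precisely where the paper's proof does its real work: it shows $I/U$ is bounded below by a positive constant, via the Riccati-type equation $(I/U)'=\tau S-\nu_2(I/U)^2+(I/U)(\tau S+\eta-\nu)$ together with $S\geq S_\infty>0$ (the derivative becomes positive when $I/U$ is small), and then $\varphi:=\frac{\nu_1}{\nu_2}-\frac{R}{U}$ satisfies $\varphi'=-\nu_2\frac{I}{U}\varphi\leq-\nu_2c\,\varphi$, forcing exponential convergence. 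Your framework can be repaired without leaving it: $(Ue^{\eta t})'=\nu_2Ie^{\eta t}\geq 0$ gives $U\geq U_0e^{-\eta t}$, feeding this into $I'\geq\tau S_\infty U-\nu I$ shows $\int_0^\infty Ie^{\eta s}\,ds$ diverges in each of the cases $\nu<\eta$, $\nu=\eta$, $\nu>\eta$, and hence $Ue^{\eta t}=U_0+\nu_2\int_0^tIe^{\eta s}\,ds\to\infty$ --- but some such argument must be supplied; as written, the convergence in (iii) is asserted, not proved.
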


\begin{proof} We first note that, due to (\ref{epidemia}) and (\ref{inicio}), we have
$$S' (0) = -\tau S_0 I_0 < 0,$$
$$I' (0) = (\tau S_0 - \nu) I_0 + \tau S_0 U_0 > 0,$$
$$R' (0) = \nu_1 I_0 > 0,$$
$$U' (0) = \nu_2 I_0 - \eta U_0 = \nu_2 I_0 \left( 1 - \frac{\eta}{\eta + \chi_2} \right) > 0.$$
Therefore, there exists $ \varepsilon> 0 $ such that $ S (t), I (t), R (t) $ and $ U (t) $ (are defined in $ [0, \varepsilon) $ and) are 
strictly positive. The arguments that follow are inspired by an observation contained in the classical book of Vladimir Arnold \cite{arnold}.

\noindent (i) Suppose $ S $ vanishes, and let $ T > 0$ be the first time this occurs. Let $C> 0$ be such that $I(t) + U(t) \leq C$  
for all $t \in [0,T]$.  Then $S' (t) \geq - \tau C S(t)$, and hence, for $t \in [0,T)$, we have
$$\frac{S' (t)}{S(t)} \geq -\tau C .$$
Integrating between  $0$ and $s < T$, this gives
$$\log (S(s)) - \log (S_0) \geq -\tau C s,$$
and then,
$$S(s) \geq S_0 \, e^{-\tau C s}.$$
Letting $ s $ go to $ T $, this contradicts the assumption $ S (T) = 0 $.

Suppose now that $ U $ vanishes, and let $ T > 0$ be the first time this occurs. If $ I $ has not vanished until this 
time, then $U' (t) = \nu_2 I(t) - \eta U(t) \geq - \eta U(t)$ for all $t \in [0,T]$. By integration, this gives
$$U(T) \geq U_0 \, e^{-\eta T},$$
which contradicts our assumption.  Hence, $I$ should have vanished in $ [0, T] $. 

The same argument above shows that if $ R $ vanishes at a time $ T > 0$, then $ I $ must have vanished at some time in $ [0, T] $.

Finally, suppose that $ I $ vanishes, and let $ T $ be the first moment this occurs. Then, $U(t) \geq 0$ 
for all  $t \in [0,T]$, and therefore 
$$I' (t) = \tau S(t) \, [I(t)+U(t)] - \nu I(t) \geq -\nu I(t).$$
However, by integration, this again gives  a contradiction, namely $ I (T) \geq I_0 e^{- \nu T} $.

We next show that neither $ I $ nor $ U $ explode (that is, none of them tends to infinity along an increasing sequence
of times tending to a finite time $ T $). Indeed, if anyone does it in time $ T $ then, from the above,
$ P \geq S (t) \geq 0 $ on $ [0, T) $, and $ U, I $ are positive on this interval. Therefore, on $ [0, T) $,
$$(I+U)'  = \tau P \, [I+U] - \nu_1 I - \eta U \leq \tau P  \, [I + U],$$
which implies by integration that $ (I + U) (t) \leq (I_0 + U_0) \, e^{\tau P t} $ for $ t \in [0, T) $.
Letting $ t $ go to $ T $, this contradicts the explosion.

To see that $ R $ does not explode, we proceed again by contradiction: if this occurs at time $ T $, then from 
$ R = \nu_1 I - \eta R \leq \nu_1 I $ we deduce $ R (t) \leq R_0 \, e^{\nu_1 C} $ for all $ t \in [0, T) $ and $C:= \max_{t \in [0,T]} I(t)$. 

Finally, $ S $ does not explode because it is decreasing and positive.

In conclusion, $ S, I, R, U $ are all positive and do not explode. To prove that they are bounded from above by $ P $,
we introduce the equation of the {\em deletted} (removed) individuals from the system:
$$D' (t) = \eta \, [R (t) + U (t)], \qquad D(0) = 0.$$
We have a constant population $ P = S (t) + I (t) + R (t) + U (t) + D (t) $, and since $ D '> 0 $, we have that $ D ( t)> 0 $ for all $ t> 0 $.
Now, since $ S, I, R, U $ are positive for $ t> 0 $, we conclude that each of them must be strictly smaller than $ P $.

\noindent (ii) First we show that $ S $ converges towards a positive limit. Let $ S _ {\infty} $ be the limit of $ S $ (which
exists because $ S $ is decreasing). Denoting $c := \min \{\eta, \nu\}$, we have
$$(I+U)' = \tau S \, [I+U] - \nu_1 I - \eta U \leq (\tau S - c) \, (I+U).$$
Since $S' = - \tau S \, [I+U]$, this implies
$$(I+U)' \leq \frac{(\tau S - c) \, S'}{- \tau S} = - S' + \frac{c}{\tau} \frac{S'}{S}.$$
Integrating between $0$ and $t > 0$, this gives
$$(I+U+S)(t) - (I_0+U_0+S_0) \leq \frac{c}{\tau} \log \left( \frac{S(t)}{S_0} \right).$$
Since $I_0+U_0+S_0 = P$,  this implies
$$-P \leq \frac{c}{\tau} \log \left( \frac{S(t)}{S_0} \right).$$
Thus, for all $t > 0$, we have \,
$S(t) \geq S_0 e^{-\frac{\tau}{c} P},$ \,
and therefore, \, $S_{\infty} \geq S_0 e^{-\frac{\tau}{c} P} > 0.$

Next, we simultaneously prove that $ I $ and $ U $ converge to $ 0 $ (the convergence of $ R $ to $0$ will be then a 
consequence of the convergence of $ R / U $ to $ \nu_1 / \nu_2 $ proved in (iii) below). To do this, we first note that, since
$$S' = - \tau S \, [I+U],$$
it follows that there is a sequence of times $ t_n \to \infty $ such that $ (I + U) (t_n) \to 0 $. Otherwise, there would 
exist  $ c> 0 $ such that $ (I + U) (t) \geq c $ for all $ t > 0$, which implies  $ S '/ S \leq - \tau c $,  and therefore
$S(t) \leq S_0 e^{-c \tau t}$. However, for $ t $ large enough, this contradicts the inequality $S(t) \geq S_{\infty} > 0$.

Now, since $\varepsilon > 0$, we may fix $T$ such that  $(I+U) (T) \leq \varepsilon / 2$ and 
$S(t) \leq S_{\infty} + \varepsilon/2$ for all $t \geq T$. We claim that $(I+U) (t) \leq \varepsilon$ for all $t \geq T$. 
(Since $\varepsilon > 0$ was arbitrarily chosen, this concludes the proof of the convergence of $ I + U $ towards $ 0 $.) 
To prove this, note that, since $(S+I+U)' (t) = -\nu_1 I - \eta U < 0$,  we have $(S+I+U)(t) \leq (S+I+U)(T)$ for all $t \geq T$, hence
$$(I+U)(t) \leq (I+U)(T) + [S(T)-S(t)] \leq \frac{\varepsilon}{2} + \frac{\varepsilon}{2} = \varepsilon,$$
as we wanted to show.

\noindent (iii) To prove that $R/U$ converges to $\nu_1 / \nu_2$, we first remark that 
\begin{equation}\label{derivada}
\left( \frac{R}{U} \right)' = \frac{R' U - R U'}{U^2} = \frac{(\nu_1 I - \eta R ) \, U - R \, (\nu_2 I - \eta U)}{U^2} 
= \nu_2 \, \frac{I}{U} \left( \frac{\nu_1}{\nu_2} - \frac{R}{U} \right).
\end{equation}
Therefore,
$$\frac{R}{U} > \frac{\nu_1}{\nu_2} \Longrightarrow \left( \frac{R}{U} \right)' (t) < 0,$$
$$\frac{R}{U} < \frac{\nu_1}{\nu_2} \Longrightarrow \left( \frac{R}{U} \right)' (t) > 0.$$
In other words, if $ R / U $ is smaller (resp. greater) than $ \nu_1 / \nu_2 $ at a point
$ t $, then it is increasing (resp. decreasing) around this point. 

We first prove that $ R / U $ cannot be equal to $ \nu_1 / \nu_2 $ at any point. To do this, 
we note that $ R_0 / U_0 = 0 \neq \nu_1 / \nu_2 $. We define $\varphi := \frac{\nu_1}{\nu_2} - \frac{R}{U}$. 
Equality (\ref{derivada}) then becomes 
$$\varphi' = - \nu_2 \frac{I}{U} \varphi.$$
If $ T $ were the first instant at which $ R / U = \nu_1 /  \nu_2 $, then $ T $ would be  the first zero of $ \varphi $. 
By (i), there exists $C > 0$ such that $\nu_2 I / U \leq C$ on $[0,T]$. Since $R_0 = 0$, 
one has $\varphi (t) > 0$ for all positive but small-enough $t$. Choosing such a $t$ smaller than $T$ we obtain, on $[t,T),$
$$\frac{\varphi'}{\varphi} \geq - C.$$
By integration, this gives $ \varphi (T) \geq \varphi (t) \, e^{t-T} $, which contradicts the fact that $ \varphi (T) = 0 $.

To prove the convergence of $ R / U $ towards $ \nu_1 / \nu_2 $, which is equivalent to that of $ \varphi $ towards $ 0 $, 
we will use the fact (proven below) that $ I / U $ is bounded from below by a positive constant $ c$. Assuming this, we have
$$\frac{\varphi'}{\varphi} \leq -\nu_2 \, c,$$
and hence,
$$\varphi (T) \leq \varphi (t) \, e^{(t-T) \nu_2 c},$$
which converges to $ 0 $ as $T \to \infty$. 

To conclude, we must show that $ I / U $ does not approach zero. For this, we begin by noting that
$$\left( \frac{I}{U} \right)' 
= \frac{I'U - IU'}{U^2}
= \frac{(\tau S \, [I+U] - \nu I ) \, U -  I \, (\nu_2 I - \eta U)}{U^2}
= \tau S - \nu_2 \left( \frac{I}{U} \right)^2 + \frac{I}{U} (\tau S + \eta - \nu).$$
Since $ S \geq S _ {\infty} $, if $ I / U $ is very small, then the derivative $ (I / U) '$ becomes positive,
and therefore $ I / U $ grows. As a consequence, $ I / U $ cannot arbitrarily approximate $ 0 $.
\end{proof}

There are several remarks to the proof above.

\begin{rem}\label{first-r}
The proof above applies to more general initial conditions 
than (\ref{inicio}): it only requires that the values $ S_0> 0, I _0> 0, R_0 \geq 0 $ and $ U_0 \geq 0 $
are such that $ I '(t_0) $, $ R' (t_0) $ and $ U '(t_0) $ are all positive. However, note that,
at this level of generality, in statement (iii) above the convergence of $ R / U $ towards $ \nu_1 / \nu_2 $ can occur
from above, and even the quotient $ R / U $ can remain constant and equal to $ \nu_1 / \nu_2 $ throughout the whole evolution.
\end{rem}

\begin{rem} 
In the classical SIR model, the fact that the population
of susceptibles converges towards a positive limit is often presented as a consequence of the so-called 
{\em final size relation} \cite{libro}. In our context, $ S'$ not only depends on $ I $, but also on $ U $. For this reason, it is
hard to expect such a simple relation, and this partly justifies the use of robust estimates in the preceding proof.
\end{rem}

\begin{rem}
The convergence of $ R $ and $ U $ towards $0$ must hold at a speed lower than $ e^{- \eta t} $. Indeed, from 
the last two equations of the system one obtains
$$\frac{R' + \eta R}{\nu_1} = I = \frac{U' + \eta U}{\nu_2},$$
which can be rewritten in the form $\nu_2 ( R e^{\eta t})' = \nu_1 (U e^{\eta t})'$. By integration, this gives
$$\nu_2 R e^{\eta t} = \nu_1 U e^{\eta t} + C, \qquad \mbox{where} \quad C := - \nu_1 U_0.$$
Therefore,  
$$\frac{R}{U} = \frac{ \nu_1 }{ \nu_2 } + \frac{C}{\nu_2 \, U e^{\eta t}}.$$
Since $ R / U $ converges to $ \nu_1 / \nu_2 $, the product $ U e^{\eta t} $ must tend to infinity, 
thus showing our claim for $ U $. The claim for $ R $ immediately follows from this.
\end{rem}

\begin{rem}
Another observation is related to the proportion $ R / U $. Since this number grows throughout the epidemic,
the dynamic variability that we mentioned at the beginning of this work holds. Moreover, the convergence of $ R / U $ 
towards $ \nu_1 / \nu_2 $ tells us what should be the values of the parameters in the implementation. 
We will return to this point for a specific case in the next section.
\end{rem}

\begin{rem}
Finally, we must mention that one of the basic results on the SIR model has not been incorporated into the theorem 
above, namely, that of the uniqueness of the peak of the curve of infected cases. In fact, this uniqueness is no longer valid 
for the SIRU model. Below we draw an example of a double peak for the reported and unreported cases curves. Note that though this 
occurs under initial conditions different from (\ref{inicio}), it holds in a context in which the theorem is still valid, according to Remark \ref{first-r}.
\begin{figure}[H]
\centering
    \includegraphics[width=0.4\textwidth]{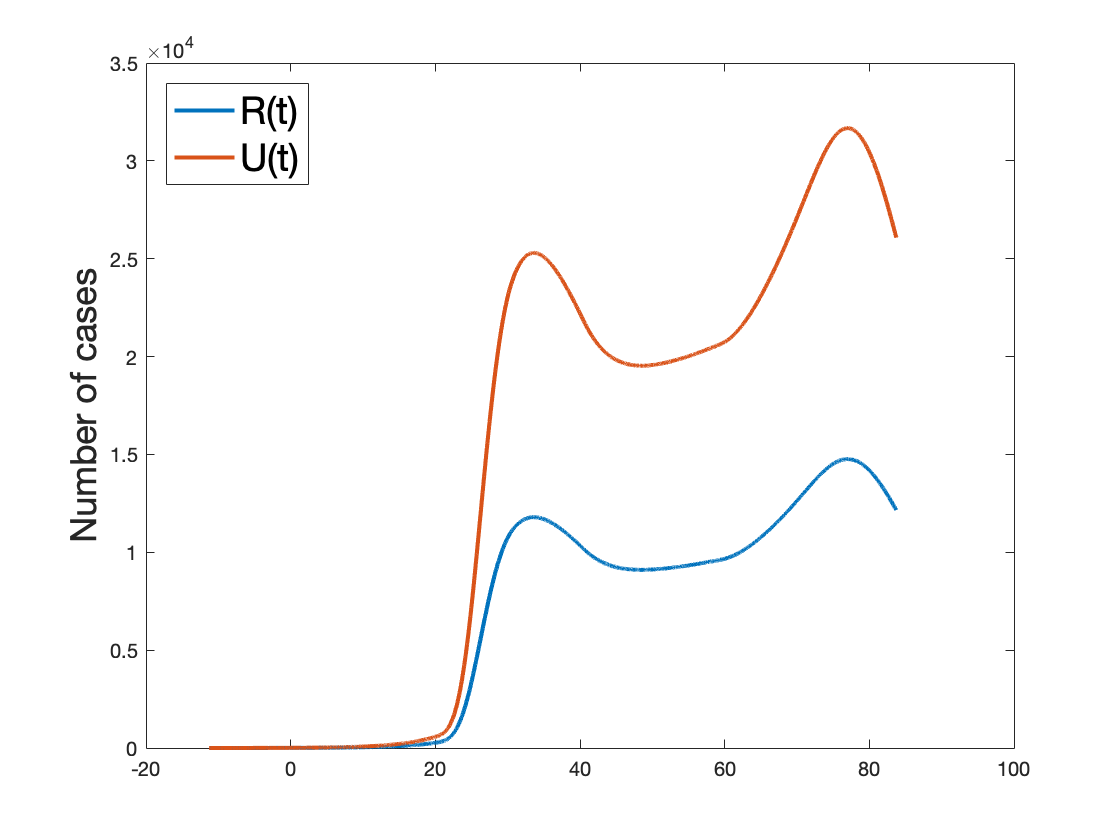}
    \caption{A double peak for the curves.}
    \label{fig:mesh2}
\end{figure}
Examples of this type can be easily obtained. For the curve $ R $, one starts by imposing the conditions
$ R '(t_p) = 0 $ and $ R''(t_p)> 0 $  for a time $ t_p > 0 $ different from the starting time. These conditions 
can be translated into conditions only on the values of $ S, I, R $ and $ U $  at that time. 
Then the SIRU equations are implemented in both directions of time around $ t_p $. This 
moment will hence correspond to a local minimum of the curve, necessarily
located between two peaks.

The same argument allows to build examples in which the curve of infected cases has two peaks. Naturally, this is due to the presence
of $ U $ in the derivative of $ I $, which is an absent element of the SIR model. In fact, in this one, the condition $ I' = 0 $ necessarily
implies $ I '' <0 $, as a straightforward computation shows. This occurs only at the peak of the curve, which corresponds to the
time when herd immunity is achieved.
\end{rem}


\section{Observations based on official numbers}

\subsection{General implementation of the SIRU model}
 
Our goal now is to compare the data available on COVID-19 in Chile until May 14 (2020) \cite{8} with what is 
predicted by the SIRU model, so that we can estimate the evolution of the number of unreported cases 
throughout the period. We point out that a first estimate of unreported cases in (some regions of) Chile using the
data of March 2020 and the SIRU model was carried out by M\'onica Candia and Gast\'on Vergara-Hermosilla in \cite{7}.

To begin with, let us point out that in outbreaks of influenza disease, the parameters $ \tau, \ \nu, \ \nu_1, \ \nu_2, \ \eta $, as well as the 
initial values $ S_0, I_0 $ and $ U_0 $, are generally unknown. However, it is possible to identify them from specific time data of
reported symptomatic cases. The {\em cumulative number} of infectious cases reported at a time $ t $, denoted by
$ CR (t) $, is given by
$$CR(t) := \nu_1 \int_{t_0}^t I(s)ds.$$
This is data that is openly available. Likewise, the cumulative  number of unreported cases at a time $ t $ is
$$CU(t) := \nu_2 \int_{t_0}^t I(s)ds.$$
We note that, up to constants, these values coincide respectively with
$$R(t) + \eta \int_{t_0}^t R(s) \, ds \qquad \mbox{ and } \qquad U(t) + \eta \int_{t_0}^t U(s) \, ds.$$

Following the scheme used by Liu, Magal, Seydi and Webb, we will assume that, at an early stage of the disease,
$ CR (t) $ has an (almost) exponential form:
$$
CR(t) =\chi_ 1\exp(\chi_ 2 t)-\chi_ 3.
$$
For simplicity, 
we will assume that $ \chi_3 = 1 $. The values of $ \chi_ 1 $ and $ \chi_2$ 
will then be adjusted to the accumulated data of cases reported in the early phase of the epidemic\footnote{Specifically, 
to adjust $\chi_1,\ \chi_2$ we consider the next 28 days since the detection of the first case in Chile (Talca).} using a 
classical least squares method (after passing to logarithmic coordinates). According to the above (see \cite{1} for details), 
for numerical simulations, the initial time for the beginning of the exponential growth phase is fitted at
$$
t_{0} := -\frac{1}{\chi_ {2}}\cdot\log \left(\chi_ {1}\right).
$$

Again, for simplicity, we will identify the initial value $ S_0 $ to that of the total population of Chile (since there is no prior immunity against 
the virus). Once the values of $ \nu $, $ \eta $ and $ f $ are set, the conditions at the beginning of the disease are naturally fitted as
$$
I_{0}:=\frac{\chi_ {2}}{f\left(\nu_{1}+\nu_{2}\right)} = \frac{\chi_2}{\nu_1}, 
\quad U_{0}:=\left(\frac{(1-f)\left(\nu_{1}+\nu_{2}\right)}{\eta+\chi_{2}}\right) I_{0} = \frac{\nu_2 \, I_0}{\eta + \chi_2}, 
\quad R_{0}:=0.
$$

We recall that $ 1 / \nu $ corresponds to the average time during which patients are asymptomatic infectious. As in \cite{1,2,3}, we will let this 
parameter be equal to 7 (days), and the same value will be used for the average time in which patients are reported or unreported infectious:
$$\tau = \eta = 1/7.$$ 
Note that although the incubation period has been reported as being slightly smaller, the lag in the delivery of the results of the PCR  tests 
in Chile justifies this choice.


\subsection{On the fraction of unreported cases}

To implement the SIRU system we still need to establish a good value for the parameter $ f $ (the fraction of symptomatic cases that are
reported). Once this is fitted, we will have the values of
$$
\nu_1=f\nu\quad \mbox{ y }\quad \nu_2=(1-f)\nu,
$$
and we will just need to fit the value of $ \tau $.

Before continuing, it is worth pointing out that in \cite{1,2,3} there is no major discussion on the  criterion used to establish
the value of $ f $ in the different scenarios. Actually, a value issued by the medical counterpart is assumed as valid. (For example, 
$ f =  0.8 $ is considered for China.) In our modeling, we will use the work of Baeza-Yates \cite{6} and that of Castillo and 
Past\'en \cite{CP}, who use the case fatality rate of the disease (with the correct correction according to its duration; see 
\cite{avner, letal}) to give estimates for the right number of infected individuals.\footnote{They estimate this number 
between $ 60 \% $ and $ 70 \% $ higher than the one reported for the period studied.} Since Baeza-Yates' argument is 
simpler and is not included in an academic publication, we borrow it below in a language closer to that of the SIRU  
model. As we will see, it yields a method to adjusting the value of $ f $ that can be used in almost all contexts.

Since we know that $ R / U $ converges towards $ \nu_1 / \nu_2 $, we  assume for this calculation that $ R / U $ 
is simply equal to $ \nu_1 / \nu_2 $. In addition, we will argue in discrete units of time (in days). Then we have
$$R(n) = f \, [R(n) + U(n)]$$
If $ d $ is the average time of illness to death and $ M (n) $
the number of deaths in day $n$, then the case fatality rate $ L $ corresponds to
$$L = \frac{M(n)}{[R (n-d) + U(n-d)]}.$$ 
Moreover, the reported case fatality rate is
$$L_R = \frac{M(n)}{R(n-d)} = \frac{M(n)}{[R(n-d)+U(n-d)]} \cdot \frac{[R(n-d)+U(n-d)]}{R(n-d)} = \frac{L}{f}.$$
This gives 
$$f = \frac{L}{L_R}.$$
In the Chilean context, deaths in the period studied occurred within $9.4$ days after the disease was reported.
Adjusting $ d = 9$, the computation of $ L_R $ is made feasible from the data available in \cite{8}.

Finally, regarding the natural case fatality ratio $ L $ of the disease, it is deduced from international studies that, after
adjusting it to the age distribution of the Chilean population, it should vary between $ 0.2 \% $ and $ 1 \% $, with 
a very high tendency to be around $ 0.6 \% $. In summary, this gives a parameter $ f $ varying between $ 0.1 $ 
and $ 0.5 $, with a high tendency to be close to $\frac{1}{3.4} \sim 0.3$. 


\subsection{Variations in the transmission rate} 

Given the heterogeneity of the safeguard measures taken by the Chilean government, instead of directly applying the SIRU model, it 
became more pertinent to us to consider a variable transmission rate $ \tau $ (as a function of time). To analyze the variation of the values
of $ \tau (t) $, we observe how the percentage of the Chilean population subjected to confinement has been changing, which is illustrated below:
\begin{figure}[H]
\centering
    \includegraphics[width=0.65\textwidth]{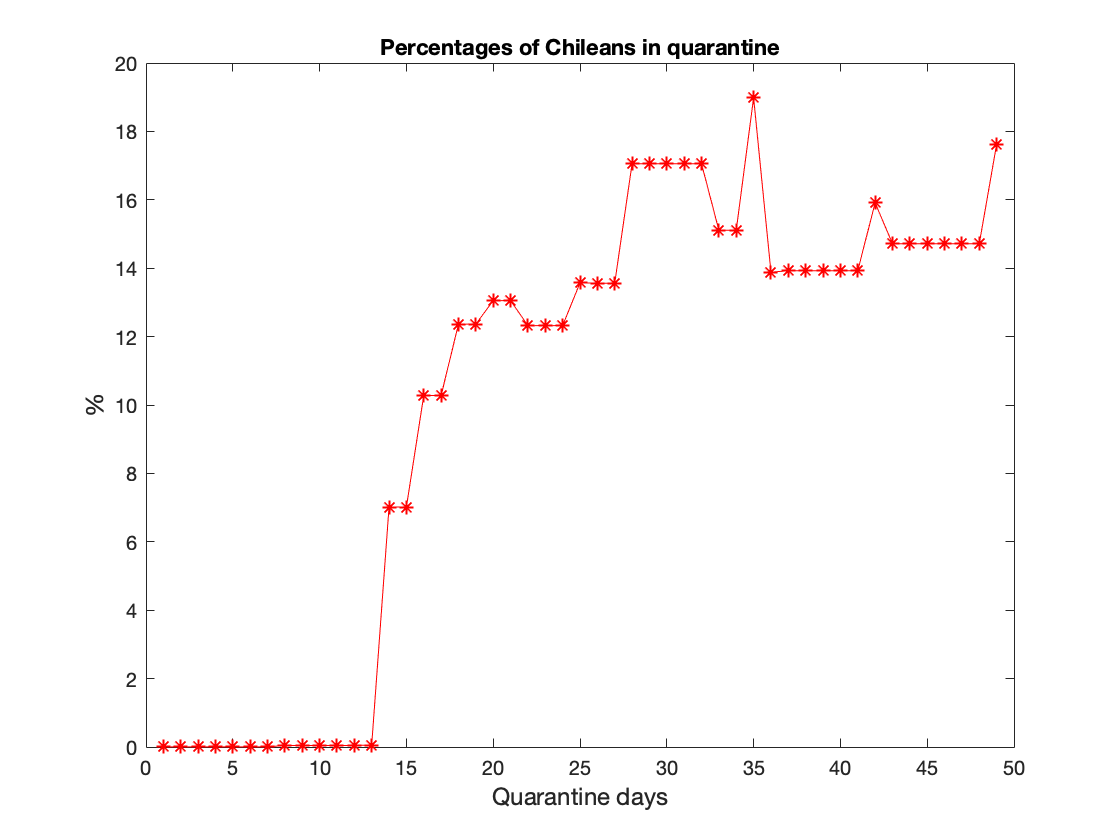}
    \caption{Variation of the percentages of the population in quarantine in Chile corresponding to the first 49 days from March 13, 2020.}
    \label{fig:mesh3}
\end{figure}
We hence propose a function $ \tau (t) $ of the form
$$
\tau(t)=\left\lbrace 
\begin{array}{ll}
\tau_0 & \mbox{ if } t\in \mathcal{I}_1 = [N_0,N_1],\\
\tau_1(t)=\tau_0 \exp(-\mu_1(t-N_1)) & \mbox{ if } t\in  \mathcal{I}_2 = (N_1,N_2],\\
\vdots& \,\, \vdots\\
\tau_r(t)=\tau_{r-1}(t) \exp(-\mu_{r-1}(t-N_{r-1})) & \mbox{ if } t\in  \mathcal{I}_{r} = (N_{r-1}, N_r],
\end{array}
\right. 
$$
where the $ \mathcal{I} _i $'s correspond to successive time intervals. For the Chilean context, according to the graph of quarantines
illustrated above (Figure 3), the chosen time intervals are those described in Table \ref{tabla:sencilla1} below:
\vspace{0.5cm}
\begin{table}[H]
\begin{center}
 \arraycolsep=1.4pt\def\arraystretch{1.1}
\begin{tabular}{ p{2.5cm} p{6cm} p{2.2cm}  }
\hline \hline
Interval   & Time frame & $N_i$\\
 \hline \hline
$\mathcal{I}_{1}$  & from March 3 to 22 & March 22 \\ 
$\mathcal{I}_{2}$  & from March 23 to April 1 & April 1  \\ 
$\mathcal{I}_3$  & from April 2 to 11 & April 11  \\ 
$ \mathcal{I}_{4}$  & from April  12 to 21  &  April 21  \\ 
$ \mathcal{I}_{5}$  & from April 22 to 31 & April 31  \\ 
$\mathcal{I}_6$  & from May 1 to 10  & May 10    \\ 
$\mathcal{I}_7$  & from May 11 to 14  & May 14  \\ 
\hline \hline
\end{tabular}
\caption{Time intervals used in our numerical simulations.}
\label{tabla:sencilla1}
\end{center}
\end{table}
\vspace{0.14cm}
Following \cite {1,2,3}, the value of the transmission rate in $ \mathcal {I} _1 $ is fitted to 
$$
\tau_0 := \left( \frac{\chi_ {2}+\nu}{S_{0}}\right) \left(  \frac{\eta+\chi_ {2}}{\nu_{2}+\eta+\chi_ {2}}\right). 
$$
Then, the parameters $ \mu_i $ are chosen in such a way that the reported cumulated cases in the numerical 
simulation align with the data of the reported cumulative number of infections at time $ t $ according to \cite{8}.
This is implemented with various values of $ f $, following the guidelines of the preceding subsection (specifically,
we work with $ f =  0.2$, $ f = 0.3$ and $ f =  0.4$). A summary is described in the following Table:
\begin{table}[H]
\begin{center}
 \arraycolsep=1.4pt\def\arraystretch{1.1}
\begin{tabular}{ p{2.5cm} p{2cm}  p{2cm}   p{2cm}   }
\hline \hline
 Parameter   &$  f=0.2$ &$  f=0.3$ & $ f=0.4$  \\
 \hline \hline
$ \mu_{1}$  & 8.3$\cdot 10^{-4}$ & 8.3$\cdot 10^{-4}$  &8.3$\cdot 10^{-4}$  \\ 
$ \mu_{2}$  & 2.1$\cdot 10^{-6}$ & 2.1$\cdot 10^{-6} $& 2.1$\cdot 10^{-6}$  \\ 
$\mu_3$  & 5$\cdot 10^{-6}$ & 5$\cdot 10^{-6}$ & 5$\cdot 10^{-6}$   \\ 
$ \mu_{4}$  & 1.41$\cdot 10^{-6}$ &1.41$\cdot 10^{-6}$ & 1.41$\cdot 10^{-6}$  \\ 
$ \mu_{5}$  & 5.08$\cdot 10^{-5}$ & 5.08$\cdot 10^{-5}$ & 5.08$\cdot 10^{-5}$  \\ 
$\mu_6$  & 2.96$\cdot 10^{-4}$ & 2.96$\cdot 10^{-4}$ & 2.96$\cdot 10^{-4}$  \\ 
\hline \hline
\end{tabular}
\caption{Parameters $ \mu_i $ corresponding to the different values of $ f $ considered in our numerical simulations.}
\label{tabla:sencilla2}
\end{center}
\end{table}
\vspace{-0.5cm}

Using the data of the cumulated reported cases available in \cite{8}, we can finally proceed to the simulations, which
are shown below. They illustrate numerical estimates for the curves of $ CR (t), \ CU (t), \ R (t) $ and $ U (t) $.

\begin{figure}[H]
\centering
    \includegraphics[width=0.71\textwidth]{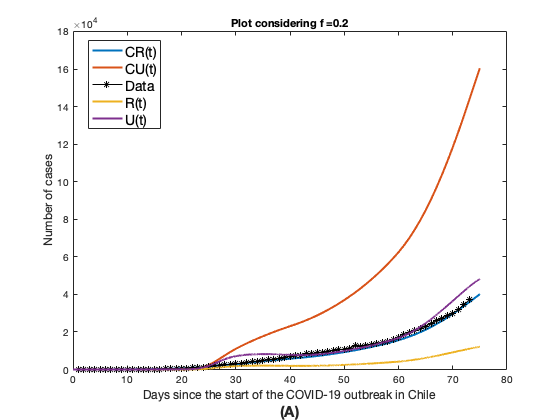}
    \centering
    \includegraphics[width=0.71\textwidth]{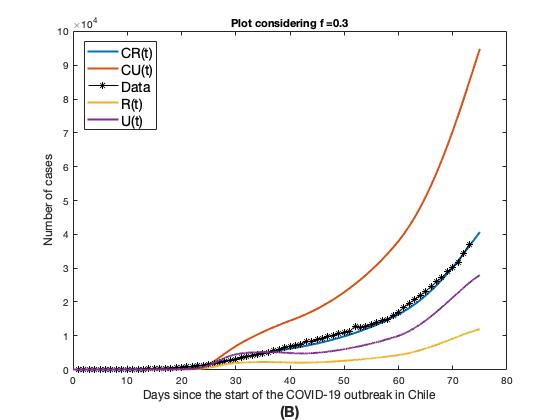}
\end{figure}
\begin{figure}[H]
    \centering
    \includegraphics[width=0.71\textwidth]{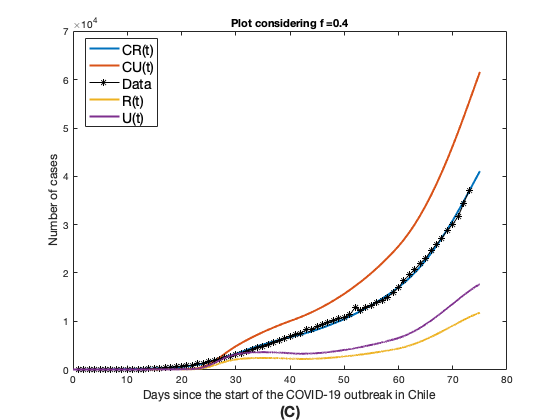}
\vspace{0.8cm}    
\caption{Plots of the numerical approximations of the functions $ CR (t), \ CU (t), \ R (t) $ and $ U (t) $ obtained from the numerical 
solutions of the model (\ref{modelo}) applied to the Chilean context based on the data of reported cumulated cases up to 14
May 2020 \cite {8}. The plots (A), (B) and (C) were obtained by considering $ f = 0.2 $, $ f = 0.3 $ and $ f = 0.4 $ respectively, 
using the parameters $\eta = 1/7$ and $\nu= 1/7$.}
    \label{fig:mesh4}
\end{figure}
\vspace{0.5cm}

\noindent{\Large{\bf Discussion and future work}}

\vspace{0.5cm}

The epidemic outbreak of the new human coronavirus COVID-19 was first detected in Wuhan, China, in late 2019.
In Chile, the first case was reported on March 3, 2020, in Talca (Maule region). Since then, modeling the epidemic 
in the country has faced the problem of the low availability of disaggregated data \cite{baeza}.

The first part of this work was inspired by the situation experienced in Chile during the month of April, when the number of reported 
cases stabilized around 450 per day. According to most specialists, this was not accurately representing the genuine epidemiological evolution.
Although it is difficult to argue that our reasoning from the first section fully applies to this situation (in particular, because the number of daily 
tests during the period was variable), the explosive increase of cases that subsequently occurred should call us to reflection on the point. 
Nevetheless, it is clear that this increase was also due in part to the relaxation of the protection measures, which is reflected not only
in the absolute (and exponential) increase in the number of cases, but also in the proportion of positive cases with respect to  
tests (the latter despite of the significant increase in the number of tests \cite{art-ciper}).

The first part of the work naturally led us to model the dynamics of the epidemic incorporating a compartment for unreported cases so that we could deal 
with their active role in the evolution. To do this, we used the SIRU model recently introduced/implemented by Liu, Magal, Seydi and Webb in \cite{1,2,3}.  
Since the qualitative theory of the underlying differential equations has not yet been treated, we developed some essenttial points of it in the second 
section, thus rigorously establishing fundamental structure theorems. However, we pointed out an important difference with the SIR model, namely, 
the possibility of multiple peaks for the curves of infected, reported and unreported patients. For the future, it would be desirable to complete 
the qualitative description of the solutions of the equations with respect to the parameters, with an emphasis on the phenomenon of multiplicity 
of peaks. For example, until now we do not know whether there can be more than two peaks and/or whether there is any restriction on the relative 
position between them. Without any doubt, this discussion is relevant for the implementation of disease containment policies, since it is directly 
related to the recognition of the moment when the curves begin to definitively descend.

In the last section of the work, we implemented an extension of the model to the case of Chile. To do this, we considered 
a variable transmission rate, which is more appropriate according to the local reality. Our rate is coupled
to the official statistics provided by the government in \cite{8}, information that also allowed us to make the 
parametric afjustement. The last ingredient to launch the simulations was to fit the value of the fraction $f$ of infected 
individuals that were reported during the period of study. For this we used the work of Baeza-Yates \cite{6} and 
that of Castillo and Past\'en \cite{CP}. Naturally, our modeling is consistent with their estimates. In particular, in Section 2.3 of \cite{CP},
the authors estimate the number of real cases for April 28 \footnote{This corresponds to day 57 from the first case detected in Talca.} 
as 43095, which is remarkably close to the number of total cases that can be inferred from the plot (B) in Figure 4 obtained 
by considering $ f = 0.3 $. We point out that the prediction of Castillo and Past\'en is recovered 
with complete accuracy through the method used in this work 
for the value $ f = 0.31791 $. According to our modeling, on April 28, a percentage of $ 31.791 \% $ of the 
infections was reported to health agencies, and consequently $ 68.209 \% $ of the infected cases was not reported.

Although the estimates above were obtained {\em a posteriori}, their complementarity puts us in a good position  
to use them for modeling the future evolution of the epidemic. However, working in this direction requires great caution.
In particular, it would be necessary to consider a variable parameter $f$, since both the quantity and the criteria of the tests have been
modified during late May. Despite this, we strongly believe that the basis for pursuing the implementation of the SIRU model are fullfilled, 
and it would be very useful to advance in a more compartmentalized and georeferenced implementation of it.


\vspace{0.2cm}

\noindent{\bf Acknowledgments.} We would like to thank Mar\'ia Paz Bertoglia, M\'onica Candia, Alicia Dickenstein, Yamileth Granizo,
Rafael Labarca,  Mario Ponce and Marius Tucsnak for their reading, their kind remarks and suggestions of bibliography.


\vspace{-0.3cm}

\begin{small}

\vspace{0.1cm}

\noindent Andr\'es Navas\\

\vspace{-0.2cm}

\noindent Dpto. de Matem\'aticas y Ciencias de la Computaci\'on,\\

\vspace{-0.2cm}

\noindent Universidad de Santiago de Chile\\

\vspace{-0.2cm}

\noindent \textit{e-mail:} andres.navas@usach.cl

\vspace{0.1cm}

\noindent Gast\'on Vergara-Hermosilla

\vspace{-0.2cm}

\noindent  Institut de Math\'ematiques de Bordeaux, \\

\vspace{-0.2cm}

\noindent Universit\'e de Bordeaux, Francia\\

\vspace{-0.2cm}

\noindent \textit{e-mail:} gaston.vergara@u-bordeaux.fr

\end{small}

\end{document}